
\documentclass[letterpaper, 10 pt, journal, twoside]{IEEEtran}
\usepackage{cite}  
\usepackage{amsmath,amssymb,amsfonts}
\usepackage{comment}
\usepackage{xcolor}
\usepackage{multicol}
\usepackage{svg}
\usepackage{mathtools,mathbbol}
\usepackage{multicol,capt-of}
\usepackage{hhline}
\usepackage{multirow}
\usepackage[ruled]{algorithm2e}
\usepackage{algpseudocode}
\usepackage{graphicx}
\usepackage{textcomp}
\usepackage{xcolor}
\usepackage{colortbl}
\usepackage{ragged2e,adjustbox}
\usepackage{bm}
\usepackage{xr,framed}
\usepackage{nicefrac}
\allowdisplaybreaks[3]

\newtheorem{theorem}{Theorem}
\newtheorem{remark}{Remark}

\newtheorem{proof}{Proof}

\newtheorem{lemma}{Lemma}

\newcommand{\final}[1]{{\color{black} #1}}
\newcommand\mydots{\ifmmode\ldots\else\makebox[0.1em][c]{.\hfil.\hfil.}\thinspace\fi}
\allowdisplaybreaks

\def\BibTeX{{\rm B\kern-.05em{\sc i\kern-.025em b}\kern-.08em
    T\kern-.1667em\lower.7ex\hbox{E}\kern-.125emX}}

\RestyleAlgo{ruled}



\begin{document}
\title{
Inverse Optimal Control with Constraint Relaxation
}

\author{Rahel Rickenbach, Amon Lahr, and Melanie N. Zeilinger
\thanks{This work has been supported by the European Union’s Horizon 2020 research and innovation programme, Marie Sklodowska-Curie grant agreement No. 953348, ELO-X.}
\thanks{The datasets generated and/or analysed are available in the eth research collection repository, 10.3929/ethz-b-000745683}
\thanks{All authors are with the Institute of Dynamic Systems and Control, ETH Zurich, Zurich, Switzerland. Email correspondence to: {\tt\small \{rrahel,amlahr,mzeilinger\}@ethz.ch}}}

\maketitle
\thispagestyle{empty}

\begin{abstract}
Inverse optimal control (IOC) is a promising paradigm for learning and mimicking optimal control strategies from capable demonstrators, or gaining a deeper understanding of their intentions, by estimating an unknown objective function from one or more corresponding optimal control sequences. When computing estimates from demonstrations in environments with safety-preserving inequality constraints, acknowledging their presence in the chosen IOC method is crucial given their strong influence on the final control strategy. However, solution strategies capable of considering inequality constraints, such as the inverse Karush-Kuhn-Tucker approach, rely on their correct activation and fulfillment; a restrictive assumption when dealing with noisy demonstrations. To overcome this problem, we leverage the concept of exact penalty functions for IOC and show preservation of estimation accuracy. Considering noisy demonstrations, we then illustrate how the usage of penalty functions reduces the number of unknown variables and how their approximations enhance the estimation method’s capacity to account for wrong constraint activations within a polytopic-constrained environment. The proposed method is evaluated for three systems in simulation, outperforming traditional relaxation approaches for noisy demonstrations.
\end{abstract}

\begin{IEEEkeywords}
Constrained Control, Optimal Control, Uncertain Systems 
\end{IEEEkeywords}

\section{Introduction}
\subsection{Motivation}
\IEEEPARstart{O}{ptimization-based} control strategies, such as model predictive control (MPC) \cite{kouvaritakis2016}, have proven to be effective in performing complex tasks in constrained environments. However, while their performance often heavily depends on a suitably designed objective function, translating a complex goal description into an objective function can be challenging and its tuning delicate. Inverse optimal control (IOC) methods \cite{abazar2020} address this issue by estimating missing parameters of a partially known cost function from optimal control sequences. However, when relying on control sequences obtained in an environment with constraints, the constraints' impact on the final control strategy can be significant. Accordingly, neglecting their influence in the chosen IOC method can result in high estimation errors of the unknown parameters.  
Nonetheless, while equality constraints are widely accepted in most IOC methods, inequality constraints are only considered in certain cases, e.g., in~\cite{englert2017,menner2019}, as their presence results in a more complex problem formulation. Furthermore, building upon the Karush-Kuhn-Tucker (KKT) conditions, these approaches rely on correct constraint activation and fulfillment, which describes a limiting assumption, considering that numerous available control sequences may be suboptimal in practice, for instance, due to their human-generated nature or the presence of measurement noise, and hence might not share the same active set. 
In this paper we \final{relax this limiting assumption} by leveraging the concept of exact penalty functions within the inverse KKT objective estimation method. 
\subsection{Related Work}
Estimation of unknown objective functions from optimal demonstrations, considered in IOC and inverse reinforcement learning (IRL) has been widely addressed in the literature, see e.g. the review in \cite{abazar2020}. While the terms IRL and IOC are often used interchangeably, their problem description could serve as a potential distinction. Apart from some deterministic considerations, e.g., in \cite{wu2023human}, IRL often models the data generated by a Markov Decision Process, which directly allows for the consideration of sub-optimal demonstrations. Well-known solution approaches build upon entropy maximization \cite{ziebart2008,levine2012continuous}, or Bayesian principles \cite{lopes2009active}. While further investigations focus, e.g., on the inverse problem's convexity \cite{garrabe2023} or model-free IRL \cite{sun2025inverse, lian2023inverse}, the consideration of safety-preserving inequality constraints has only recently gained interest \cite{fischer2021sampling,tschiatschek2019learner}. Stochasticity of the dynamics model in IOC is investigated in \cite{nakano2023inverse,mohajerin2018data,garrabe2023probabilistic,garrabe2025convex,yu2021system}. While state constraints are not explicitly considered in \cite{nakano2023inverse,yu2021system,garrabe2023probabilistic}, they are enforced via penalization in \cite{garrabe2025convex}, as well as via a feasible set in \cite{mohajerin2018data}. However, \cite{mohajerin2018data} is limited to convex optimal control problems. Many results in IOC have focused on a deterministic perspective, where, e.g., usage of the KKT optimality conditions, as done in \cite{keshavarz2011imputing},\cite{englert2017} and \cite{menner2019}, allows for a natural consideration of inequality constraints. To account for noisy demonstrations, these works resort to a relaxation of the stationarity condition. However, the influence of potential changes in the set of active constraints, due to measurement noise, is commonly not addressed. A combined estimation of optimal demonstrations and unknown objective parameters, as in \cite{menner2020} and \cite{rickenbach2023inverse}, reduces estimation errors in this case, but relies on the solutions of a mathematical problem with complementarity constraints, which are challenging for most available \mbox{solvers \cite{kanzow2013}}.

\subsection{Contribution}
The contributions of this paper are twofold. By exploiting the concept of exact penalty functions \cite{fletcher2000,kerrigan2000}, we investigate its usage in IOC and show that exact penalties can be leveraged to formulate the original problem as an unconstrained optimal control problem (OCP) with a modified cost while preserving the corresponding IOC estimate. Moreover, considering noisy demonstrations, we illustrate how this reformulation permits a reduction in the number of estimation variables while alleviating the strict selection mechanism of active and inactive inequality constraints in the traditional inverse KKT method via relaxation of the primal feasibility and complementary slackness conditions. This enhances the proposed estimation method's capacity to account for noisy demonstrations within a constrained environment. 
Introducing the problem statement and preliminaries in Section~\ref{sec:problemstatement}, the proposed method is presented in Section~\ref{sec:iocwithconstraintrelaxation} and extended to noisy demonstrations in Section~\ref{sec:iocwithexactpenaltyfunctionsandnoisydemonstrations}. Finally, we conclude this paper with experimental results on three different simulated systems in Section~\ref{sec:experimentalresults}.
\section{Notation}
\label{sec:notation}
The set of real numbers is indicated with $\mathbb{R}$ and $\mathcal{N}(\mu,\Sigma)$ refers to a normal distribution with mean $\mu \in \mathbb{R}^{c}$ and covariance $\Sigma \in \mathbb{R}^{c \times c}$. Given a vector $a \in \mathbb{R}^{c}$ and a matrix $Q \in \mathbb{R}^{c \times c}$ we refer with $a[i]$ to the $i^{th}$, and with $a[0:i]$ to the first $i$ elements. With $Q[i,j]$ we refer to the element located in row $i$ and column $j$, and with \mbox{$Q[i,:]$} and $Q[:,j]$, we reference the $i^{th}$ row and the $j^{th}$ column, respectively. Stacking vectors horizontally is indicated using round brackets, i.e., $b = (a^{\top}, a^{\top})^{\top} \in \mathbb{R}^{2c}$, and $\{a\}$ returns the vector's size, i.e., its number of elements. Defining a function \mbox{$h(a): \mathbb{R}^{b} \rightarrow \mathbb{R}$}, then $\frac{\partial h(a)}{\partial a[j]}$ defines its partial derivative with respect to $a[j]$. Finally, the matrix $\mathbb{O}_{n,m}$  describes a zero matrix of dimensions $n \times m$, and $\odot$ indicates a Hadamard product.   

\section{Problem Statement}
\label{sec:problemstatement}
In this section we introduce the considered OCP and present the inverse KKT approach, which describes a suitable IOC method for optimal demonstrations in constrained environments. We then provide details on the structure of the available noisy demonstrations and comment on an established relaxation method of the inverse KKT estimator.
\subsection{Forward Problem}
\label{subsec:forwardproblem}
\noindent We consider a known, deterministic, discrete-time system
\begin{equation}
    x_{k+1} = f(x_k,u_k),
\end{equation}
where $x_k \in \mathbb{R}^n$ indicates the state and $u_k \in \mathbb{R}^m$ the input vector at time step $k$, generating state and input trajectories, $X = (x_0^{\top},..., x_{N}^{\top})^{\top} \in \mathbb{R}^{n(N+1)}$ and \mbox{$U = (u_0^{\top},..., u_{N-1}^{\top})^{\top} \in \mathbb{R}^{mN}$}, with length $N+1$ and $N$, respectively. Furthermore, to later on simplify the notation, we stack input and state vectors for the first $N$ time steps, defining a trajectory vector \mbox{$T = (t_0^{\top},\dots,t_{N-1}^{\top})^{\top} \in \mathbb{R}^{(n+m)N}$}, where \mbox{$t_k = (x_{k}^{\top},u_{k}^{\top})^{\top} \in \mathbb{R}^{n+m}$} for $k=0,\dots,N-1$. 
The considered OCP, in the following referred to as the \textit{forward problem}, follows as 
\begin{subequations}
\begin{align}
    \min_{U, X} \  &\sum_{k = 0}^{N-1} \theta^{\top} \phi(x_k,u_k) \label{eq:ocpcost}\\
    \mathrm{s.t.} \ & g(x_k,u_k)\leq \mathbb{O}_{p,1} \label{eq:ocp1}\\
    &x_{k+1} = f(x_k,u_k) \label{eq:ocp2} \\
    & x_0 = x(0) \label{eq:ocp3}\\
    & k = 0,...,N-1. \label{eq:ocp4}
\end{align}
\label{eq:optimizationproblem}
\end{subequations}
It is of horizon length $N$ and contains inequality constraints \mbox{$g(x_k,u_k) \leq \mathbb{O}_{p,1}$}, of dimension $p$ in \eqref{eq:ocp1}. Furthermore, its objective function in \eqref{eq:ocpcost} consists of an unknown coefficient vector \mbox{$\theta \in \mathbb{R}^q$} and a known feature vector \mbox{$\phi(x,u) \in \mathbb{R}^q$}. Note that the considered forward-problem design is fairly standard and follows, e.g., the line of \cite{akhtar2021learning,menner2019,garrabe2023probabilistic}.
The optimal input, state and trajectory vector in accordance to OCP~\eqref{eq:optimizationproblem} are indicated by \mbox{$X^* \in \mathbb{R}^{n(N+1)}$}, \mbox{$U^* \in \mathbb{R}^{mN}$}, and $T^*  \in \mathbb{R}^{(n+m)N}$, with \mbox{$X^* = (x_0^{*\top}, ..., x_{N}^{*\top})^{\top}$}, \mbox{$U^* = (u_0^{*\top}, ..., u_{N-1}^{*\top})^{\top}$}, and \mbox{$T^* = (t_0^{*\top},\dots,t_{N-1}^{*\top})^{\top}$}, respectively, and it is assumed that $f$, $g$, and $\phi$ are twice continuously differentiable.
Terminal ingredients, like terminal costs and constraints, are omitted for simplicity, however, all subsequent derivations allow for their inclusion.

\subsection{The Inverse Karush-Kuhn-Tucker Approach}
\label{subsec:inverseKKTapproach}
The goal of IOC is to provide an estimate of $\theta$, solely relying on available optimal demonstrations $X^*$ and $U^*$. In a constrained state and input space, such an estimate can, for example, be obtained using the inverse KKT approach, which is presented in \cite{englert2017,menner2019} and detailed in the following.

\noindent Introducing the Lagrangian multipliers \mbox{$v_k \in \mathbb{R}^n$}, \mbox{$\lambda_{k} \in \mathbb{R}^p$} and summarizing them for \mbox{$k = 0,...,N-1$} in the respective vectors \mbox{$\lambda \in  \mathbb{R}^{pN}$}, \mbox{$v \in  \mathbb{R}^{nN}$}, with $\lambda = (\lambda_{0}^{\top}, ..., \lambda_{N-1}^{\top})^{\top}$ and \mbox{$v = (v_{0}^{\top}, ..., v_{N-1}^{\top})^{\top}$}, the Lagrangian of \eqref{eq:optimizationproblem} follows as
\begin{equation}
\begin{split}
    & \mathcal{L}(\theta,\lambda,v,X,U) = \\
    & \sum_{k = 0}^{N-1} \theta^{\top} \phi(x_k,u_k) +
    \begin{pmatrix}
    v_k \\ 
    \lambda_k
    \end{pmatrix}^{\top} 
    \begin{pmatrix}
    x_{k+1} - f(x_k,u_k) \\ 
    g(x_k,u_k) 
    \end{pmatrix}.
\end{split}
\end{equation}
As the KKT conditions constitute necessary optimality requirements for any solution of an optimization problem that fulfills suitable constraint qualifications \cite{kuhn2014}, their induced system of equations can be solved for $\theta$, $v$, and $\lambda$. Hence, defining $b=n(N+1)+mN$, then for all $k = 0,...,N-1$ the conditions read as
\begin{subequations}
\begin{align}
    &\nabla_{X,U}\mathcal{L}(\theta,\lambda,v,X,U)\vert_{X = X^{*}, U = U^{*}} = \mathbb{O}_{b,1} \label{eq:kkt1}\\
    & \lambda_{k} \odot g(x_k^{*},u_k^{*}) = \mathbb{O}_{p,1} \label{eq:kkt2}\\ 
    & g(x_k^*,u_k^*)  \leq \mathbb{O}_{p,1} , x_{k+1}^{*} = f(x_k^*,u_k^*) \label{eq:kkt3b} \\
    &\lambda_{k} \geq \mathbb{O}_{p,1}. \label{eq:kkt4}
\end{align}
\label{eq:kktsyseq}
\end{subequations}
Note that the primal feasibility conditions~\eqref{eq:kkt3b} do not provide information on any estimation variable of interest, are trivially fulfilled for the optimal demonstrations and hence neglected in the inverse KKT approach.\\
For simplification and comparisons with reformulations in upcoming sections, we leverage the considered problem description, which is linear in the unknown objective parameter $\theta$, 
and transform the stationarity condition in equation~\eqref{eq:kkt1} into the following regression problem
\begin{equation}
    \begin{pmatrix}
    J_{\theta}^{\top}  & J_{v}^{\top} & J_{\lambda}^{\top}
    \end{pmatrix}\vert_{X=X^*,U=U^*}
    \begin{pmatrix}
    \theta  \\ 
    v \\
    \lambda
    \end{pmatrix}
    = \mathbb{O}_{b,1},
    \label{eq:linearsystemofequation}
\end{equation}
where the Jacobian of the summed-up objective feature vector, equality and inequality constraints are denoted as $ J_{\theta} = \frac{\partial \sum_{k = 0}^{N-1}\phi(x_k,u_k)}{\partial(X^{\top},U^{\top})}$, $J_{v} = \frac{\partial F(X,U)}{\partial(X^{\top},U^{\top})}$, and $J_{\lambda} = \frac{\partial G(X,U)}{\partial(X^{\top},U^{\top})}$ respectively, with 
$
    F(X,U) = ((x_{1} - f(x_0,u_0))^{\top},...,(x_{N} - f(x_{N-1},u_{N-1}))^{\top})^{\top}
$
and 
$
G(X,U) = (g(x_0,u_0)^{\top},...,g(x_{N-1},u_{N-1})^{\top})^{\top}.
$
Since the regression problem describes a necessary optimality condition, it has at least one solution. Underdetermination can often be addressed by leveraging the complementary slackness constraint in equation \eqref{eq:kkt2}, to eliminate the multipliers associated with inactive inequality constraints and accordingly achieve full column rank. 
Consequently, it divides the vector $\lambda$ into a vector $\tilde{\lambda}$ containing all timesteps and respective dimensions at which \mbox{$g(x_k^*,u_k^*) \leq \mathbb{O}_{p,1}$} is active, i.e., employing some suitable reordering of $g$ such that \mbox{$g(x_k^*,u_k^*)[0:\{ \tilde{\lambda}\}] = \mathbb{O}_{\{ \tilde{\lambda}\},1}$}, as well as a vector $\check{\lambda}$ containing all timesteps at which \final{\mbox{$g(x_k^*,u_k^*) \leq \mathbb{O}_{p,1}$}} is not active, i.e., \mbox{$g(x_k^*,u_k^*)[\{ \tilde{\lambda}\}:p] < \mathbb{O}_{p-\{ \check{\lambda}\},1}$}, and sets the multipliers~$\tilde{\lambda}$ of the latter equal to zero. Applying the same separation to the rows of $J_{\lambda}$, we obtain $J_{\tilde{\lambda}}$ and $J_{\check{\lambda}}$, respectively, and the problem in \eqref{eq:kktsyseq} reduces to:
\begin{align}
    & 
    \begin{pmatrix}
    J_{\theta}^{\top}  & J_{v}^{\top} & J_{\tilde{\lambda}}^{\top} & J_{\check{\lambda}}^{\top}
    \end{pmatrix}\vert_{X=X^*,U=U^*}
    \begin{pmatrix}
    \theta  \\ 
    v \\
    \tilde{\lambda} \\
    \mathbb{O}_{\{\check{\lambda}\},1}
    \end{pmatrix}
    = \mathbb{O}_{b,1} \nonumber \\
    & \ \ \mathrm{s.t.} \ \ \tilde{\lambda}_k \geq \mathbb{O}_{p,1} \ \ \forall k = 0,...,N-1.
    \label{eq:linearsystemofequationactiveconstraints}
\end{align}

\subsection{Noisy Demonstrations}
\label{subsec:availabledemonstrations}
Acknowledging that truly optimal demonstrations are often unavailable, we consider a set of $D$ demonstrations, corrupted with measurement noise on state and input. The measurement noise is assumed to be independent between time steps, but potentially cross-correlated between states and inputs. \final{Therefore, we leverage the trajectory notation~$T$, and indicate the joint measurement noise, acting on the stacked state and input vectors,} with $w^t_k$. It is distributed as a zero-mean Gaussian with standard deviation \mbox{$\Sigma^t_k\in \mathbb{R}^{(n+m) \times (n+m)}$}, i.e., \mbox{$w^t_k \sim \mathcal{N}(\mathbb{O}_{(n+m),1}, \Sigma^t_k)$} for \mbox{$k = 0,...,N-1$}. Indicating its realization in the $d$-th demonstration with $w^t_{k,d}$ the corresponding measurements follow as $t_{k,d} = t_{k}^* + w^t_{k,d}$
and are summarized for each demonstration in \mbox{$T_d \in \mathbb{R}^{(n+m)N}$} with \mbox{$T_d = (t_{0,d}^{\top},\dots,t_{N-1,d}^{\top})^{\top}$}. Separating into states and input sequences, we correspondingly obtain \mbox{$U_d  \in \mathbb{R}^{mN}$}, with \mbox{$U_d = (u_{0,d}^{\top}, ..., u_{N-1,d}^{\top})^{\top}$}, and \mbox{$X_d \in \mathbb{R}^{n(N+1)}$}, with \mbox{$X_d = (x_{0,d}^{\top}, ..., x_{N,d}^{\top})^{\top} $}, where we additionally define $x_{N,d}^{\top} = x_{N}^* + w^x_{N,d}$. Here we denote $w^x_{N,d}$ as the $d$-th realization of the measurement noise $w^x_N$ on $x_N^*$, which is distributed as a zero-mean Gaussian with standard deviation \mbox{$\Sigma^x_N\in \mathbb{R}^{n \times n}$}, i.e.,  \mbox{$w^x_N \sim \mathcal{N}(\mathbb{O}_{n,1}, \Sigma^x_N)$}.

\subsection{Relaxation of the inverse KKT Approach}
\label{subsec:relaxationofinversekkt}
Working with noisy demonstrations, i.e., considering the set of $X_d$ and $U_d$, with $d=1,...,D$, the KKT equations presented in \eqref{eq:kkt1}-\eqref{eq:kkt4} might be violated and the resulting system of equations potentially has no solution. In the inverse KKT approach \cite{englert2017,menner2019}, \cite{keshavarz2011imputing} this issue is accounted for with a least-squares relaxation of the stationarity condition in \eqref{eq:kkt1}, resulting in the following optimization problem:
\begin{subequations}
\begin{align}
    \min_{\substack{\theta,v_{1:D} \\ \lambda_{1:D}}} \ &\sum_{d=1}^{D}\Vert \nabla_{X,U} \mathcal{L}(\theta,\lambda_d,v_d,X,U)\vert_{X = X_d, U = U_d}\Vert_2^2 \label{eq:kkt1ls}\\
    \mathrm{s.t.} \ &\lambda_{k,d} \odot g(x_{k,d},u_{k,d}) = \mathbb{O}_{p,1} \label{eq:kkt2ls}\\ 
    & \lambda_{k,d} \geq \mathbb{O}_{p,1} \label{eq:kkt4ls} \\
    & k=0,...,N-1, \ \ d=1,...,D. \label{eq:kkt5ls}
\end{align}
\label{eq:kktls}
\end{subequations}
While this approximation considers violations of the stationarity condition, note that, considering noisy demonstrations as introduced in Section~\ref{subsec:availabledemonstrations}, constraint violations are possible. However, potential violations of primal feasibility in equation \eqref{eq:kkt2}, as well as wrongful constraint inactivation are commonly not accounted for. 
Furthermore, the probability of exactly activating an inequality constraint, i.e., obtaining a noise combination $w^t_{k,d}$ such that \mbox{$g(x_{k,d},u_{k,d}) = \mathbb{O}_{p,1} $}, is zero. Hence, with probability one, usage of the complementary slackness relation, as shown in problem description~\eqref{eq:linearsystemofequationactiveconstraints} of Section~\ref{subsec:inverseKKTapproach}, results in \mbox{$\lambda_{k,d} = \mathbb{O}_{p,1}$} for all $d = 1, ..., D$ and $k = 0, ..., N-1$, representing an unconstrained IOC problem, which, given the importance of inequality constraints for an optimization problem's solution, can result in large estimation errors. An intuitive approach to address this issue under the assumption that the optimal solution satisfies the constraints, is to replace \eqref{eq:kkt2ls} with 
\begin{align}
    \lambda_{k,d} \odot \min(\mathbb{O}_{p,1},g(x_{k,d},u_{k,d})) = \mathbb{O}_{p,1}, \label{eq:kkt2trunc}
\end{align}
\label{eq:kkttrunc}
and truncate the constraint value with zero.
This can result in an increased estimation performance; however, it neither offers a strong theoretical foundation nor provides a solution for noise realizations \emph{inactivating} an inequality constraint that would have been active under the optimal solution. 
\begin{remark}
    Note that the inverse KKT approach with noisy data describes an errors-in-variables problem, as investigated in \cite{rickenbach2023inverse}. Accordingly, a least-squares estimate is generally biased. Nevertheless, it has shown good empirical approximation behavior for sufficiently small noise realizations (see e.g., \cite{menner2019}, \cite{keshavarz2011imputing}) and is therefore also considered in this paper.
\end{remark}
\section{IOC with Exact Penalty Functions and Optimal Demonstrations}
\label{sec:iocwithconstraintrelaxation}
In the following, we propose a reformulation of the IOC problem in \eqref{eq:kktsyseq} that leverages the concept of exact penalty functions, preserves estimation accuracy under the availability of optimal demonstrations, and serves as a basis for the subsequent constraint relaxation when considering noisy demonstrations.
For this purpose, we introduce exact penalty functions for OCPs in the first subsection and address their employment for IOC in the following one. 

\subsection{Exact Penalty Functions in Optimal Control Problems}
The method of exact penalty functions, e.g., presented in \cite{fletcher2000,kerrigan2000}, allows for the inclusion of inequality constraints in the objective of an optimal control problem. This is achieved by replacing the respective inequality constraint, represented by $g(x,u) \leq \mathbb{O}_{p,1}$,  with an L1-penalization of constraint violation, i.e., $g_{\max}(x,u) \in \mathbb{R}^{p}$ where
$
    g_{\max}(x,u) = (\vert\max(0,g(x,u)[1])\vert,...,\vert\max(0,g(x,u)[p])\vert)^{\top}.
$ 
This penalization is further scaled with $\rho_k \in \mathbb{R}^{p}$ , resulting in a reformulation of OCP \eqref{eq:optimizationproblem} that reads as
\begin{subequations}
\begin{align}
    \min_{U, X} \ &\sum_{k = 0}^{N-1} \theta^{\top} \phi(x_k,u_k) + \rho_k^{\top} g_{\max}(x_k,u_k) \\
    \mathrm{s.t.} \ & \eqref{eq:ocp2}-\eqref{eq:ocp4}.
\end{align}
\label{eq:optimizationproblemexactpenalty}
\end{subequations}
\noindent Denoting with $\lambda_k^*$ the solution of the corresponding dual problem of \eqref{eq:optimizationproblem}, which is equal to the solution of \eqref{eq:kktsyseq}, we can follow the line of \cite{fletcher2000,kerrigan2000} and restate Lemma~\ref{therorem:exactpenalty}:

\begin{lemma}[cf. \cite{kerrigan2000}, Thm. 1]
    If the penalty weight $\rho_k[j]$ is bigger or equal to $\Vert \lambda_k^*[j] \Vert_{\infty}$, for all $j=1,...,p$, then any solution of \eqref{eq:optimizationproblem} is a solution of~\eqref{eq:optimizationproblemexactpenalty}.
    \label{therorem:exactpenalty}
\end{lemma}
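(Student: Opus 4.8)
The plan is to compare the penalized objective of \eqref{eq:optimizationproblemexactpenalty} against the original objective of \eqref{eq:optimizationproblem} over the set of trajectories satisfying the dynamics \eqref{eq:ocp2}--\eqref{eq:ocp4}, which both problems retain as hard constraints. Let $(X^*,U^*)$ be a solution of \eqref{eq:optimizationproblem} with associated multipliers $\lambda_k^*,v_k^*$ from \eqref{eq:kktsyseq}, and write
\[
J_{\mathrm{p}}(X,U)=\sum_{k=0}^{N-1}\theta^\top\phi(x_k,u_k)+\rho_k^\top g_{\max}(x_k,u_k)
\]
for the penalized cost. Since $(X^*,U^*)$ is feasible for \eqref{eq:optimizationproblem}, we have $g(x_k^*,u_k^*)\le\mathbb{O}_{p,1}$, hence $g_{\max}(x_k^*,u_k^*)=\mathbb{O}_{p,1}$ and $J_{\mathrm{p}}(X^*,U^*)=\sum_k\theta^\top\phi(x_k^*,u_k^*)$. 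The goal is therefore to show $J_{\mathrm{p}}(X,U)\ge J_{\mathrm{p}}(X^*,U^*)$ for every dynamics-feasible $(X,U)$.

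The first key step is a pointwise lower bound on the penalty by the Lagrangian's inequality term. For each $k$ and component $j$ I would argue by case distinction on the sign of $g(x_k,u_k)[j]$: when $g[j]\le 0$ the penalty contribution $\rho_k[j]\,g_{\max}[j]$ vanishes while $\lambda_k^*[j]\,g[j]\le 0$ by dual feasibility $\lambda_k^*\ge\mathbb{O}_{p,1}$; when $g[j]>0$ the hypothesis $\rho_k[j]\ge\Vert\lambda_k^*[j]\Vert_\infty=\lambda_k^*[j]$ gives $\rho_k[j]\,g_{\max}[j]=\rho_k[j]\,g[j]\ge\lambda_k^*[j]\,g[j]$. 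In both cases $\rho_k[j]\,g_{\max}(x_k,u_k)[j]\ge\lambda_k^*[j]\,g(x_k,u_k)[j]$, and summing over $j$ and $k$ yields $\sum_k\rho_k^\top g_{\max}(x_k,u_k)\ge\sum_k(\lambda_k^*)^\top g(x_k,u_k)$. Because the dynamics term $(v_k^*)^\top(x_{k+1}-f(x_k,u_k))$ vanishes on dynamics-feasible trajectories, this shows $J_{\mathrm{p}}(X,U)\ge\mathcal{L}(\theta,\lambda^*,v^*,X,U)$.

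It then remains to chain this with the optimality of $(X^*,U^*)$ for the Lagrangian. Invoking the stationarity condition \eqref{eq:kkt1}, $\nabla_{X,U}\mathcal{L}\vert_{X^*,U^*}=\mathbb{O}_{b,1}$, together with complementary slackness \eqref{eq:kkt2}, which makes $(\lambda_k^*)^\top g(x_k^*,u_k^*)=0$ and hence $\mathcal{L}(\theta,\lambda^*,v^*,X^*,U^*)=\sum_k\theta^\top\phi(x_k^*,u_k^*)=J_{\mathrm{p}}(X^*,U^*)$, I would conclude $\mathcal{L}(\theta,\lambda^*,v^*,X,U)\ge\mathcal{L}(\theta,\lambda^*,v^*,X^*,U^*)=J_{\mathrm{p}}(X^*,U^*)$ and combine the two inequalities to obtain $J_{\mathrm{p}}(X,U)\ge J_{\mathrm{p}}(X^*,U^*)$, which is the claim.

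I expect the main obstacle to be precisely the inequality $\mathcal{L}(\theta,\lambda^*,v^*,X,U)\ge\mathcal{L}(\theta,\lambda^*,v^*,X^*,U^*)$, i.e.\ that the stationary point $(X^*,U^*)$ actually minimizes the Lagrangian rather than merely being a critical point of it. For a convex forward problem (convex $\phi$ and $g$, affine dynamics) this follows globally from \eqref{eq:kkt1}, since $\mathcal{L}(\theta,\lambda^*,v^*,\cdot)$ is then convex in $(X,U)$ and stationarity is sufficient for a global minimizer, giving the stated equivalence of solutions. For the general twice-differentiable $f,g,\phi$ considered here the argument is only local: one restricts to a neighborhood of $(X^*,U^*)$ and invokes second-order sufficient conditions so that $(X^*,U^*)$ is a local minimizer of the Lagrangian, whence the equivalence must be read as a correspondence of local minimizers. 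I would make this convexity/locality caveat explicit, as it is the only place where the generality of the forward problem enters the argument.
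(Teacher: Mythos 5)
First, a point of reference: the paper does not actually prove Lemma~\ref{therorem:exactpenalty} --- it restates it from \cite{kerrigan2000} (Thm.~1) and \cite{fletcher2000} --- so your proposal can only be compared against the standard argument in that literature. Your proof is precisely that standard saddle-point argument: (i) on dynamics-feasible trajectories the penalized cost dominates the Lagrangian evaluated at $(\lambda^*,v^*)$, via the componentwise case distinction using dual feasibility $\lambda_k^*\geq\mathbb{O}_{p,1}$ and $\rho_k[j]\geq\lambda_k^*[j]$; (ii) the Lagrangian at $(\lambda^*,v^*)$ is minimized over dynamics-feasible $(X,U)$ at $(X^*,U^*)$; (iii) complementary slackness~\eqref{eq:kkt2} and primal feasibility close the chain. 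Steps (i) and (iii) are correct as written, and you are right that the non-strict inequality $\rho_k[j]\geq\lambda_k^*[j]$ suffices for the claimed direction (constrained solution $\Rightarrow$ penalized solution) whenever step (ii) holds --- strictness is only needed for the converse.

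Step (ii) is indeed the crux, and your caveat is warranted --- but your proposed nonconvex repair does not work, and the failure is worth understanding precisely. Second-order sufficient conditions give positive definiteness of the Lagrangian Hessian only on the critical cone, not on the whole space, so a KKT point satisfying SOSC need \emph{not} be a local minimizer of the Lagrangian. Concretely, take $\min\{-u^2 : u\leq 1,\ -u\leq 1\}$: at the solution $u^*=1$ the multipliers are $\lambda^*=(2,0)^{\top}$, SOSC holds vacuously (the critical cone is trivial since $\lambda^*_1>0$), yet the Lagrangian $-u^2+2(u-1)=-(u-1)^2-1$ has a strict \emph{maximum} at $u^*$. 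Worse, the penalized objective $-u^2+\rho\max(0,u-1)+\rho\max(0,-u-1)$ is unbounded below for every finite $\rho$, so the global claim of the lemma itself fails for nonconvex problems; and with $\rho=\Vert\lambda^*\Vert_{\infty}=2$ exactly, $u^*$ is not even a \emph{local} minimizer of the penalized problem. The correct nonconvex statement (\cite{fletcher2000}, Thm.~14.3.1) is therefore local, requires the strict inequality $\rho_k[j]>\lambda_k^*[j]$, and is proven via directional derivatives of the penalty function rather than through Lagrangian minimization. The clean way to present your proof is thus: it is a complete and correct proof in the convex setting --- which matches the linear-MPC/QP setting of the cited Theorem~1 in \cite{kerrigan2000} --- while for the general twice-differentiable forward problem \eqref{eq:optimizationproblem} considered in this paper, the lemma (as the paper states it) inherits exactly the convexity/locality caveat you identified, and the local version needs Fletcher's different argument, not a localized version of yours.
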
 

\subsection{Exact Penalty Functions in Inverse Optimal Control}

Not solving the OCP in \eqref{eq:optimizationproblem}, but working with measurements of its solution, the dual solution $\lambda_k^*$, and hence the values of $\rho_k$ for which $X^*$ and $U^*$ describe a solution of~\eqref{eq:optimizationproblemexactpenalty}, are unknown for all $k = 0, ..., N-1$. Thus, to incorporate the idea of exact penalty functions into the IOC problem, we extend the vector of unknown objective parameters by $\rho = (\rho_0, ..., \rho_{N-1})^{\top}$ and the considered OCP follows as 
\begin{equation}
\begin{split}
    \min_{U, X} \ & 
        \sum_{k = 0}^{N-1} \begin{pmatrix} \theta \\ \rho_k \end{pmatrix}^{\top} \begin{pmatrix} \phi(x_k,u_k) \\ g_{\max}(x_{k},u_{k}))  \end{pmatrix}  \\
    \mathrm{s.t.} \ & \eqref{eq:ocp2}-\eqref{eq:ocp4}.
\end{split}
\label{eq:optimizationproblemexactpenaltyioc}
\end{equation}

\noindent IOC now aims at estimating the extended parameter vector $\bar{\theta} = (\theta, \rho) \in \mathbb{R}^{q+Np}$. 
Building upon the KKT conditions of OCP \eqref{eq:optimizationproblemexactpenaltyioc}, and constructing the Lagrangian as
\begin{align}
    & \mathcal{L}_{ep}(\theta,\rho,v,X,U) = \\
    & \sum_{k = 0}^{N-1} \begin{pmatrix}
    \theta \\ 
    \rho_k
    \end{pmatrix}^{\top} \begin{pmatrix} \phi(x_k,u_k) \\ g_{\max}(x_{k},u_{k}) \end{pmatrix} + v_k^{\top} \begin{pmatrix}
    x_{k+1} - f(x_k,u_k) 
    \end{pmatrix} \nonumber,
\end{align}
such an estimate can be obtained as the solution of the following system of equations:
\begin{subequations}
\begin{align}
    &\nabla_{X,U}\mathcal{L}_{ep}(\theta,\rho,v,X,U)\vert_{X = X^{*}, U = U^{*}} =  \mathbb{O}_{b,1}, \label{eq:kktep1}\\
    &  \rho_{k} \geq \mathbb{O}_{p,1} \ \ \forall k = 0,...,N-1. \label{eq:kktep2}
\end{align}
\label{eq:kktepsyseq}
\end{subequations}
Note that, as a consequence of Lemma \ref{therorem:exactpenalty}, in \eqref{eq:kktep2} we additionally included the knowledge of each element of $\rho_k$ being positive. Like in the traditional inverse KKT approach, the problem description remains linear in the unknown objective parameter $\bar{\theta}$. Defining the derivative of the $\max$ operator as its right-hand derivative, we indicate with \mbox{$J_{\rho}  = \frac{\partial G_{\max}(X,U)}{\partial(X^{\top},U^{\top})}$} the Jacobian of the feature vector extension, where $G_{\max}(X,U) = (g_{\max}(x_0,u_0)^{\top},...,g_{\max}(x_{N-1},u_{N-1})^{\top})^{\top}$, it can be transformed into the following constrained regression problem:
\begin{align}
        & \underbrace{\begin{pmatrix}
        J_{\theta}^{\top}  & J_{v}^{\top} & J_{\rho}^{\top}
        \end{pmatrix}}_{J_{ep}(X,U)} \vert_{X=X^*,U=U^*}
        \underbrace{\begin{pmatrix}
        \theta  \\ 
        v \\
        \rho
        \end{pmatrix}}_{\beta_{ep}} \nonumber
        = \mathbb{O}_{b,1} \\ & \ \ \mathrm{s.t.} \ \rho_k \geq \mathbb{O}_{p,1} \ \ \forall k=0,...,N-1. 
        \label{eq:linearsystemofequationexactpenaltynotseparated}
\end{align} 
Considering the obtained estimate, we formulate Theorem~\ref{theorem:exactnessofsolution}, stating that the estimate of the unknown objective parameter~$\theta$ from optimal demonstrations $X^*$ and $U^*$ of the inverse-KKT formulation with penalty functions \eqref{eq:linearsystemofequationexactpenaltynotseparated} is equal to the estimate obtained from the original inverse-KKT formulation~\eqref{eq:linearsystemofequationactiveconstraints}.

\begin{theorem} 
    Let $\theta_1,v_1$, and $\lambda_1$ be the solution of the estimation problem in \eqref{eq:linearsystemofequationactiveconstraints} and $\theta_2,v_2$, and $\rho_2$ the solution of the estimation problem in \eqref{eq:linearsystemofequationexactpenaltynotseparated}. Then it holds that $\theta_1 = \theta_2$.
    \label{theorem:exactnessofsolution} 
\end{theorem}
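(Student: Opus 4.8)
The plan is to show that the two constrained regression problems share the same set of stationarity equations once the penalty terms and multipliers are matched, so that the $\theta$-component of any solution coincides. First I would expand the two Jacobian systems \eqref{eq:linearsystemofequationactiveconstraints} and \eqref{eq:linearsystemofequationexactpenaltynotseparated} and compare them row by row. The key observation is that the stationarity condition in both cases is $J_\theta^\top \theta + J_v^\top v + (\text{inequality term}) = \mathbb{O}_{b,1}$, evaluated at the optimal $X^*,U^*$; the $J_v^\top v$ block is identical in structure, so the entire argument reduces to showing that the inequality-constraint contribution $J_{\tilde\lambda}^\top \tilde\lambda$ in the original formulation equals the penalty contribution $J_\rho^\top \rho$ in the penalized formulation for a suitable choice of multipliers.

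The main step is to analyze $J_\rho$, the Jacobian of $G_{\max}(X,U)$, at the optimal trajectory. Using the right-hand derivative convention stated in the excerpt, I would show that $\frac{\partial}{\partial(X^\top,U^\top)} g_{\max}(x_k,u_k)$ evaluated at the optimum collapses exactly onto the active constraints: for a component of $g$ that is strictly inactive at $(x_k^*,u_k^*)$, the inside of $\max(0,\cdot)$ is negative, so $g_{\max}$ vanishes in a neighborhood and its derivative is zero; for an active component, $g(x_k^*,u_k^*)=0$ and the right-hand derivative of $\max(0,\cdot)$ equals the derivative of $g$ itself. Hence the nonzero rows of $J_\rho$ at $X^*,U^*$ coincide precisely with the rows of $J_{\tilde\lambda}$ (the active-constraint Jacobian), while the rows corresponding to $\check\lambda$ (inactive constraints) vanish. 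This is the structural heart of the proof: the exact-penalty reformulation automatically performs the same active-set selection that the complementary-slackness condition \eqref{eq:kkt2} enforces in the original approach.

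With that identification in hand, I would match the remaining variables: set the penalty weights of inactive constraints to zero (mirroring $\mathbb{O}_{\{\check\lambda\},1}$) and identify the active penalty weights with the active multipliers $\tilde\lambda$, which are nonnegative and thus feasible for the constraint $\rho_k \ge \mathbb{O}_{p,1}$ in \eqref{eq:linearsystemofequationexactpenaltynotseparated}. Conversely, any $\rho_2$ feasible for the penalized problem yields, via the same row identification, a nonnegative $\tilde\lambda$ feasible for \eqref{eq:linearsystemofequationactiveconstraints}. This establishes a bijection between feasible solutions of the two systems that leaves the $(\theta,v)$-block untouched, so the two stationarity systems have identical solution sets in the $\theta$-coordinate, giving $\theta_1=\theta_2$.

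The hard part will be handling the nonsmoothness of $g_{\max}$ cleanly and justifying that evaluation at $X^*,U^*$ is well defined despite the kink of the $\max$ operator. I expect the subtlety to lie in the constraint components that are active but whose penalty derivative is being taken through the nondifferentiable point; the stated right-hand-derivative convention resolves this, but I would need to argue carefully that this choice reproduces exactly $\nabla g$ on the active set (and not, say, a spurious scaling or a zero contribution), and that no inactive component accidentally contributes. Provided the demonstrations are genuinely optimal so that strict complementarity issues do not create ambiguous rows, the row-by-row matching goes through and the equality of $\theta$-estimates follows.
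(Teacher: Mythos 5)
Your proposal is correct and follows essentially the same route as the paper's proof: both arguments rest on the observation that, at the optimal demonstrations, the right-hand-derivative convention makes the rows of $J_{\rho}$ vanish for strictly inactive constraints and coincide with $J_{\tilde{\lambda}}$ for active ones, so the two regression systems agree in the $(\theta,v)$-block. The only cosmetic difference is that you describe the solution correspondence as a bijection, whereas the paper more precisely notes it is many-to-one (the inactive penalty weights $\check{\rho}$ are free up to nonnegativity while $\check{\lambda}=\mathbb{O}_{\{\check{\lambda}\},1}$), but this does not affect the conclusion $\theta_1=\theta_2$.
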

\begin{proof}
Theorem \ref{theorem:exactnessofsolution} is proven, by comparing the inverse KKT formulation of a problem formulation with and without penalization, both relying on a clear separation of active and inactive inequality constraints.
Hence, following the line of the introduced separation of the  $\lambda$-vector in Section~\ref{subsec:inverseKKTapproach}, we define an identical separation of $\rho$ into a vector $\tilde{\rho}$, containing all timesteps and respective dimensions at which \mbox{$g(x_k,u_k) \leq \mathbb{O}_{p,1}$} is active, i.e., employing some suitable reordering of $g$ such that \mbox{$g(x_k^*,u_k^*)[0:\{ \tilde{\rho}\}] = \mathbb{O}_{\{ \tilde{\rho}\},1}$}, as well as a vector $\check{\rho}$, containing all timesteps at which \mbox{$g(x_k,u_k) \leq \mathbb{O}_{p,1}$} is not active, i.e., \mbox{$g(x_k^*,u_k^*)[\{ \tilde{\rho}\}:p] < \mathbb{O}_{\{ \check{\rho}\},1}$}. Additionally, we denote with $J_{\tilde{\rho}}$ and $J_{\check{\rho}}$ the corresponding separation of  $J_{\rho}$.
Then, considering the estimation problem in \eqref{eq:linearsystemofequationexactpenaltynotseparated}, it can be seen that the separation into $J_{\tilde{\rho}}$ and $J_{\check{\rho}}$ happens implicitly under the usage of the $\max(\cdot)$ operator, setting all elements of $J_{\check{\rho}}$ to zero. 
Including the positivity constraint in \eqref{eq:kktep2}, the estimation problem follows as:
\begin{align}
    & \begin{pmatrix}
    J_{\theta}^{\top}  & J_{v}^{\top} & J_{\tilde{\rho}}^{\top} & \mathbb{O}_{b,\{\check{\rho}\}}
    \end{pmatrix} \vert_{X=X^*,U=U^*}
    \begin{pmatrix}
    \theta  \\ 
    v \\
    \tilde{\rho} \\
    \check{\rho}
    \end{pmatrix} \nonumber
    = \mathbb{O}_{b,1} \\ &  \mathrm{s.t.} \ \rho_k \geq 0 \ \ \forall k=0,...,N-1. 
    \label{eq:linearsystemofequationexactpenalty}
\end{align} 
From its definition, it follows that \mbox{$J_{\tilde{\rho}}$} is equal to  $J_{\tilde{\lambda}}$. Furthermore, it can be seen that $\check{\rho}$ does not enter the regression problem, as each of its values is multiplied by zero. Therefore, all solutions of \eqref{eq:linearsystemofequationactiveconstraints} and \eqref{eq:linearsystemofequationexactpenalty} are identical in $\theta$, $v$, and $ \tilde{\rho} = \tilde{\lambda}$, but may differ in $\check{\rho} \neq \check{\lambda}$. \hspace{5.1em}\hfill\ensuremath{\blacksquare}
\end{proof}
\section{IOC with Exact Penalty Functions and Noisy Demonstrations}
\label{sec:iocwithexactpenaltyfunctionsandnoisydemonstrations}
In this section, we extend the proposed reformulation to noisy demonstrations. For the case of polytopic inequality constraints, it will be shown how noisy demonstrations lead to a natural usage of the softmax approximation and correspondingly allow for a constraint relaxation accounting for potential violations of primal feasibility in IOC problems. 
\subsection{Exact Penalization in Inverse KKT Approximation}
Rewriting the exact-penalty reformulation of the inverse KKT estimator, introduced in \eqref{eq:kktep2}, with noisy demonstrations and leveraging the least-squares relaxation of the stationarity condition, similar to \eqref{eq:kktls}, the estimation problem follows as
\begin{subequations}
\begin{align}
    \min_{\beta_{ep}} \ &  
    \Big\|
    \sum_{d=1}^{D} J_{ep}(X_d,U_d) \beta_{ep} 
    \Big\|_2^2
    \\ 
    \mathrm{s.t.} \ & \ \rho_k \geq \mathbb{O}_{p,1}, \ k = 0,...,N-1.
\end{align}
\label{eq:iocep}
\end{subequations}
Note that considering inequality constraints in the OCP's objective not only allows for their inclusion within the least-squares approximation but also reduces unknown estimation variables. Rather than estimating an individual dual parameter vector $\lambda_d$ for each demonstration, we are interested in the parameter vector $\beta_{ep}$, and accordingly the penalty parameter vector $\rho$, which results in the smallest error over all of them.
Indicating a Jacobian's empirical mean with \mbox{$\bar{J}_{(\cdot)} = \frac{1}{D}\sum_{d=1}^{D} J_{(\cdot)}^{\top}(X_d,U_d)$}, the objective of the IOC problem in \eqref{eq:iocep} can be scaled and rewritten as
\vspace{-0.0em}
\begin{align}
    &  \left\Vert \frac{1}{D}\sum_{d=1}^{D} J_{ep}(X_d,U_d) \beta_{ep} \right\Vert_2^2  \label{eq:iocepaveraged}\\  
    & =: \Vert ( \bar{J}_{\theta}^{\top}(X_d,U_d)\theta+\bar{J}_{v}^{\top}(X_d,U_d)v+ \bar{J}_{\rho}^{\top}(X_d,U_d)\rho \Vert_2^2.   \nonumber
\end{align}
\subsection{IOC with Constraint Relaxation}
In order to relax the inequality constraints in the case of noisy demonstrations, we consider polytopic inequality constraints, i.e., \mbox{$g(x,u) = H(x^{\top},u^{\top})^{\top} - h$}, with \mbox{$H \in \mathbb{R}^{p\times(n+m)}$} and \mbox{$h \in \mathbb{R}^{p}$}, and reformulate the mathematical expression of the averaged inequality constraint Jacobian $\bar{J}_{\rho}$. 
We then recognize that the summation of all demonstrations and divisions by their number, as done in \eqref{eq:iocepaveraged}, represents an approximation of the expectation operator and swap differentiation and expectation, utilizing their linearity. Hence, for each dimension $j=1,..., p$ of the polytopic constraint and each timestep $k=0,...,N-1$, the submatrix of $\bar{J_{\rho}}$ containing the corresponding elements follows as 
\vspace{-0.0em}
\begin{align*}
    &\bar{J}_{\rho,jk} = \left. \frac{1}{D}\sum_{d=1}^{D} \left(\frac{\partial}{\partial(t_k^{\top})} \max(0,H[j,:]t_{k}^{\top} - h[j])\right) \right\vert_{t_{k,d}}   \\
    &\approx \left(\frac{\partial}{\partial(t_k^{\top})} (\mathbb{E}_{t_k}[\max(0,H[j,:]t_{k}^{\top} - h[j])]) \right) 
\end{align*}
We then leverage the fact that the measured state and input variables are both normally distributed around their optimal values. 
Hence, following the derivations in \cite{messerer2023}, it can be shown that the expectation of our penalization results in the following approximation of the ReLU function, i.e., $\mathbb{E}_{t_k}[\max(0,H[j,:]t_{k}^{\top} - h[j])] \approx \Psi(\mu_g,\sigma_g)$, where
\begin{equation}
\begin{split}
\Psi(\mu_g,\sigma_g)= \sigma_g p_{\mathcal{N}}\left(\frac{\mu_g}{\sigma_g}\right) + \mu_g P_{\mathcal{N}}\left(\frac{\mu_g}{\sigma_g}\right),\end{split}
\label{eq:expectationrelaxationexactpenalty}
\end{equation}
\noindent with $\mu_g = H[j,:]t_k^*-h[j]$ and $\sigma_g = \sqrt{H[j,:]^{\top}\Sigma^t_kH[j,:]}$. Here, $p_{\mathcal{N}}(\cdot)$ indicates the probability density function (PDF) and $\mathcal{P}_{\mathcal{N}}(\cdot)$ the cumulative distribution function (CDF) of the standard normal distribution. The expectation $t^*_k$ is thereby empirically approximated as $t^*_k \approx \frac{1}{D}\sum_{d=1}^D t_{k,d}$.\\ \final{Therefore, we overcome the stringent selection of a certain constraint being either active or not by replacing the ReLu in $g_{\mathrm{max}}$, which arises from the complementary slackness condition and can lead to high estimation errors when noise realizations lead to wrongful activation or inactivation, with its approximation in equation \eqref{eq:expectationrelaxationexactpenalty}.} Furthermore, note that the amount of relaxation does not need to be heuristically chosen, but follows directly from the amount of noise in the direction of the constraint, i.e., the value of $\sigma_g$. 
\begin{remark} 
    The resulting approximation never results in a function value equal to zero, leading to an underdetermined estimation problem when implemented without a cutoff.  
    To address this issue in our experimental section, we set function values below a certain threshold, in the following experiments chosen as $\Psi(\mu_g,\sigma_g) \leq 0.005$, (and as $\Psi(\mu_g,\sigma_g) \leq 0 $ for the noiseless case, i.e., 0.01\% noise), equal to zero. 
\end{remark}

\section{Experimental Results}
\label{sec:experimentalresults}
The proposed framework is tested on three systems in simulation: a mass-spring-damper system (System 1), an inverted pendulum (System 2), and a kinematic bicycle model (System 3). The continuous system dynamics are discretized using the backward Euler method with a sampling time of \mbox{$0.1$s} and the considered OCPs contain a polytopic inequality constraint $H(x^{\top},u^{\top})^{\top} \leq a$. 
Throughout the experiments, we consider different values for the input noise covariance, which is calculated proportional to the mean input values of the optimal demonstration over the considered horizon, denoted with $u_{m,D}$. 
All estimates obtained according to the traditional least-squares inverse KKT approximation as proposed in \cite{menner2019} and presented in Section~\ref{subsec:relaxationofinversekkt}, equation \eqref{eq:kktls}, are further denoted as $\theta_{\mathrm{KKT}}$, the ones obtained with the truncated heuristics, i.e., replacing \eqref{eq:kkt2ls} with \eqref{eq:kkt2trunc}, as $\theta_{\mathrm{TR}}$, and the ones obtained from our proposed constraint relaxation approach, i.e., combining the estimation problem in \eqref{eq:iocep} with the relaxation in \eqref{eq:expectationrelaxationexactpenalty}, as $\theta_{\mathrm{EP}}$. Their quality is evaluated through the root-mean-square error (RMSE) concerning the true values in $\theta^{*}$. Each experiment is repeated 10 times with $D=10$ demonstrations each, presenting mean and standard deviations of the RMSE in Table \ref{tab:baseanalysis}. In Table \ref{tab:robustnessanalysis} we investigate the method's robustness on the example of the kinematic bicycle model. For this purpose, we assume uncertainty of the car length and, for each experiment, sample its value in the IOC estimation uniformly from an interval of $\pm 5 \%$ around the true value. 

\paragraph{Mass-Spring-Damper System (System 1)}
\noindent The dynamics read as $\dot{x}_1 = x_2$, $\dot{x}_2 = m^{-1}(-cx_1 -dx_2 + u)$,
with $m=1.0$ $\mathrm{kg}$, $c=0.2$ $\mathrm{kg\cdot s^{-2}}$ and $d=0.1$ $\mathrm{kg\cdot s^{-1}}$.
The forward problem has a horizon of $N = 10$, includes the inequality constraint $u \leq \final{0.55}$, and we chose $\theta = (10,5,7)^{\top}$, $\phi(x,u) = ((x_1-3)^{2},(x_2-0)^{2},u^2)^{\top}$, \mbox{$x_0 = (1,0.1)^{\top}$}.
\paragraph{Inverted Pendulum (System 2)}
\noindent The dynamics of an inverted pendulum, with $m=1.0$ kg, $l= 0.8$ m and $g = 9.81$ $\mathrm{Nm^{2}kg^{-2}}$, are $\dot{x}_1 = x_2$, $\dot{x}_2 = -gl^{-1}\sin(x_1) + (ml^{2})^{-1}u.$
The forward problem has a horizon length of $N = 10$ and includes the constraint \mbox{$u \leq 1.90$}. Furthermore, we chose $\theta = (10,5,7)$, $\phi(x,u) = ((x_1-0.5)^{2},(x_2-0.1)^{2},u^{2})^{\top}$, \mbox{$x_0 = (1.5,0.5)^{\top}$}.

\paragraph{Kinematic Bicycle Model (System 3)}
\noindent Representing a more realistic use case, our approach has been tested with the kinematic bicycle model \cite{Rajamani2012}. Defining $L = 0.115$ $\mathrm{m}$, its dynamics read as $\dot{x}_1 = u_1 \cos(x_3)$, $\dot{x}_2 = u_1 \sin(x_3)$, $\dot{x}_3 = {u_1} \tan(x_4)/L$, $\dot{x}_4 = u_2$.
The forward problem has a horizon length of $N = 10$ and includes the inequality constraint \mbox{$u_2 \leq 0.35$}. Furthermore we choose \mbox{$\theta = (10,10,3,3,8,5)$}, $\phi(x,u) = ((x_1-3)^{2},(x_2-3)^{2},(x_3-0)^{2},(x_4-0)^{2},u_1^2,u_2^2)^{\top}$, \mbox{$x_0 = (0,0,0,0)^{\top}$}.
\vspace{-1.0em}
\begin{table}[h]
\caption{RMSE of estimates on Systems 1 up to 3.}
\vspace{-0.8em}
\small
\begin{center}
\begin{adjustbox}{width=0.95\linewidth}
\begin{tabular}{|c|c|c|c|c|c|}
\hline
\multicolumn{2}{|c|}{System}  & \multicolumn{3}{c|}{$\Sigma_{U}$} \\
\cline{3-5}
\multicolumn{2}{|c|}{}                   & 0.01\% $u_{m,D}$ & 5\% $u_{m,D}$  & 10\% $u_{m,D}$ \\
\hline
\rowcolor[HTML]{EFEFEF} 
\cellcolor[HTML]{FFFFFF}
&$\theta_{\mathrm{KKT}}$ & $\mathbf{0.00 \pm 0.00}$ & $4.34 \pm 1.26$ & $4.12 \pm 0.88$ \\
\cline{2-5}
\rowcolor[HTML]{EFEFEF} 
\cellcolor[HTML]{FFFFFF}
&$\theta_{\mathrm{TR}}$ & $\mathbf{0.00 \pm 0.00}$ & $\mathbf{0.39 \pm 0.21}$ & $\mathbf{1.08 \pm 0.63}$ \\
\cline{2-5}
\rowcolor[HTML]{d8eddf}
\multirow{-3}{*}{\cellcolor[HTML]{FFFFFF}1} 
&$\theta_{\mathrm{EP}}$ & $\mathbf{0.00 \pm 0.00}$ & $0.48 \pm 0.26$ & $1.34 \pm 0.77$ \\
\hline
\rowcolor[HTML]{EFEFEF} 
\cellcolor[HTML]{FFFFFF}
&$\theta_{\mathrm{KKT}}$ & $0.01 \pm 0.01$ & $8.60 \pm 0.00$ & $8.60 \pm 0.00$ \\
\cline{2-5}
\rowcolor[HTML]{EFEFEF} 
\cellcolor[HTML]{FFFFFF}
&$\theta_{\mathrm{TR}}$ & $\mathbf{0.00 \pm 0.00}$ & $8.50 \pm 0.23$ & $8.59 \pm 0.02$ \\
\cline{2-5}
\rowcolor[HTML]{d8eddf}
\multirow{-3}{*}{\cellcolor[HTML]{FFFFFF}2} 
&$\theta_{\mathrm{EP}}$ & $\mathbf{0.00 \pm 0.00}$ & $\mathbf{0.86 \pm 0.74}$ & $\mathbf{1.70 \pm 1.28}$ \\
\hline
\rowcolor[HTML]{EFEFEF} 
\cellcolor[HTML]{FFFFFF}
&$\theta_{\mathrm{KKT}}$ & $\mathbf{0.00 \pm 0.00}$ & $372.84 \pm 153.60$ & $434.81 \pm 106.72$ \\
\cline{2-5}
\rowcolor[HTML]{EFEFEF} 
\cellcolor[HTML]{FFFFFF}
&$\theta_{\mathrm{TR}}$ & $\mathbf{0.00 \pm 0.00}$ & $3.19 \pm 2.59$ & $66.70 \pm 173.20$ \\
\cline{2-5}
\rowcolor[HTML]{d8eddf}
\multirow{-3}{*}{\cellcolor[HTML]{FFFFFF}3} 
&$\theta_{\mathrm{EP}}$ & $\mathbf{0.00 \pm 0.00}$ & $\mathbf{0.25 \pm 0.25}$ & $\mathbf{0.64 \pm 0.54}$ \\
\hline
\end{tabular}
\end{adjustbox}
\end{center}
\vspace{-1.5em}
\label{tab:baseanalysis}
\end{table}
\normalsize

\begin{table}[h]
\caption{Robustness analysis on System 3.}
\vspace{-0.7em}
\small
\begin{center}
\begin{adjustbox}{width=0.95\linewidth}
\begin{tabular}{|c|c|c|c|c|c|}
\hline
\multicolumn{2}{|c|}{System}  & \multicolumn{3}{c|}{$\Sigma_{U}$} \\
\cline{3-5}
\multicolumn{2}{|c|}{}                   & 0.01\% $u_{m,D}$ & 5\% $u_{m,D}$  & 10\% $u_{m,D}$ \\
\hline
\rowcolor[HTML]{EFEFEF} 
\cellcolor[HTML]{FFFFFF}
&$\theta_{\mathrm{KKT}}$ & $2.74 \pm 2.54$ & $370.99 \pm 154.95$ & $439.15 \pm 105.85$ \\
\cline{2-5}
\rowcolor[HTML]{EFEFEF} 
\cellcolor[HTML]{FFFFFF}
&$\theta_{\mathrm{TR}}$ & $2.74 \pm 2.54$ & $9.73 \pm 10.12$ & $65.42 \pm 174.24$ \\
\cline{2-5}
\rowcolor[HTML]{d8eddf}
\multirow{-3}{*}{\cellcolor[HTML]{FFFFFF}3} 
&$\theta_{\mathrm{EP}}$ & $\mathbf{0.45 \pm 0.27}$ & $\mathbf{0.61 \pm 0.37}$ & $\mathbf{0.94 \pm 0.74}$ \\
\hline
\end{tabular}
\end{adjustbox}
\end{center}
\vspace{-1.5em}
\label{tab:robustnessanalysis}
\end{table}
\normalsize

\subsection{Discussion}
From the results, it can be seen that, while the traditional inverse KKT estimate without any constraint adaption results in high errors, the estimates obtained by the truncated inverse KKT adaption as well as our proposed approach lead to an improved estimation performance (see Table \ref{tab:baseanalysis}, System 1). Furthermore, given the nature of the chosen constraint relaxation, which additionally allows for the consideration of wrongfully \emph{inactivated} inequality constraints, our proposed approach outperforms the truncated inverse KKT adaption for several noise realizations (see Table \ref{tab:baseanalysis} and System 2 and 3). Finally, the results in Table \ref{tab:robustnessanalysis} indicate empirical robustness to model uncertainty. 

\section{Conclusion}
This paper shows how to use penalty functions in inverse optimal control and demonstrates that correct estimation can be achieved for a problem description with and without penalization using optimal demonstrations. Furthermore, we demonstrate how leveraging this concept allows for a systematic reduction of unknown variables, as well as a relaxation of the primal feasibility and the complementary slackness condition in the inverse Karush-Kuhn-Tucker approach. Performance investigations for three different systems in simulation demonstrate that the presented relaxation approach enables high estimation accuracy, even in cases where the traditional inverse KKT approach is unsuccessful.  

\bibliographystyle{docstyle/IEEEtran} 
\bibliography{main}



\end{document}